\title{Signified chromatic number of grids is at most 9}
\author{
\Large Janusz Dybizba\'{n}ski\\
\small Institute of Informatics \\[-0.8ex] 
\small Faculty of Mathematics, Physics and Informatics \\[-0.8ex]
\small University of Gda\'{n}sk \\[-0.8ex]
\small 80-308 Gda\'{n}sk, Poland\\[-0.8ex]
\small {\tt jdybiz@inf.ug.edu.pl}\\
}
\date{}
\newtheorem{theorem}{Theorem}
\newtheorem{lemma}[theorem]{Lemma}
\begin{document}
\maketitle

\begin{abstract}
A signified graph is a pair $(G, \Sigma)$ where $G$ is a graph, and $\Sigma$ is a set of edges marked with '$-$'. Other edges are marked with '$+$'. A signified coloring of the signified graph $(G, \Sigma)$ is a homomorphism into a signified graph $(H, \Delta)$. The signified chromatic number of the signified graph $(G, \Sigma)$ is the minimum order of $H$.

In this paper we show that for every 2-dimensional grid $(G, \Sigma)$ there exists homomorphism from $(G, \Sigma)$ into the signed Paley graphs $SP_9$. Hence signified chromatic number of the signified grids is at most 9. This improves upper bound on this number obtained recently by Bensmail.
\end{abstract}

\bigskip\noindent \textbf{Keywords:} signified coloring; grids; Paley graphs;\\ 
\noindent\textbf{2010 Mathematics Subject Classification:} 05C15

\section{Introduction}

In the whole paper we will use standard graph theory notations. A \textit{signified graph} is a pair $(G, \Sigma)$ where $G$ is a undirected graph with an assignment to its edges one of two signs~'$+$' and~'$-$'. $\Sigma$ is a set of edges marked with~'$-$'. For the vertex $v\in V(G)$ by $N^-(v)$ (resp. $N^+(v)$) we denote the set of neighbours of $v$ such that the edge to $v$ has assignment '$-$' (resp. '$+$'). Similarly for a set $S \subset V(G)$, we define $N^{-}(S) = \bigcup_{v\in S} N^{-}(v)$ and $N^{+}(S) = \bigcup_{v\in S} N^{+}(v)$.

A \textit{signified coloring} of a signified graph $(G, \Sigma)$ is a proper coloring $\phi$ of $V(G)$ such that if there exist two edges $\{u,v\}$ and $\{x,y\}$ with $\phi(u) = \phi(x)$ and $\phi(v) = \phi(y)$, then these two edges have the same sign. The \textit{signified chromatic number} of the signified graph $(G, \Sigma)$, denoted by $\chi_2(G, \Sigma)$, is the minimum number of colors needed for a signified coloring.

Equivalently, the signified chromatic number $\chi_2(G, \Sigma)$ of the signified graph $(G, \Sigma)$ is the minimum order of the graph $(H, \Lambda)$ such that $(G, \Sigma)$ admits a signified homomorphism to $(H, \Lambda)$. Graph $(H, \Lambda)$ we call a \textit{target graph} or \textit{coloring graph}. The signified chromatic number $\chi_2(G)$ of a graph $G$ is defined as $\chi_2(G) = \max\{\chi_2(G, \Sigma) : \Sigma \subset E(G)\}$. For a graph class $\mathcal{F}$, we define the signified chromatic number $\chi_2(\mathcal{F})$ as the maximum over signified chromatic number for any members of $\mathcal{F}$. 

In this paper we focus on signified chromatic number for class of 2-dimensional \textit{grids} $\mathcal{G}$. The grid is defined as the graph being the Cartesian product of two paths. The $\chi_2(\mathcal{G})$ was investigated in 2016 by Bensmail~\cite{bens712}, who showed that $7\leq \chi_2(\mathcal{G}) \leq 12$. Recently, in 2019, the same author improved both bounds by showing, that $8\leq \chi_2(\mathcal{G}) \leq 11$.

In Section 2 we define signified Paley graphs $SP_q$ and focus on $SP_9$ as a target graph in signified homomorphism. That graph was used in this context earlier, for example Montejano et al.~\cite{monte} show that there exist signified homomorphism from every signed outerplanar graph to $SP_9$. Some other application of that target graph we can find in~\cite{planar}. In Section 3 we prove:

\begin{theorem}\label{thmm}
For every signified grid $(G,\Sigma)$ there exists a signified homomorphism $h: (G,\Sigma) \to SP_9$.
\end{theorem}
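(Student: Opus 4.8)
The plan is to build the homomorphism $h\colon V(G)\to V(SP_9)$ one vertex at a time. Since the underlying graph of $SP_9$ is the complete graph $K_9$, any two of its vertices are joined by exactly one edge, whose sign is fixed; hence a map $h\colon V(G)\to V(SP_9)$ is a signified homomorphism if and only if, for every edge $uv$ of $G$, we have $h(u)\ne h(v)$ and the sign of $uv$ in $\Sigma$ equals the sign of the edge $h(u)h(v)$ in $SP_9$. No global consistency has to be verified, so it suffices to choose, vertex by vertex, an image meeting these local conditions. Write the $mn$ vertices of the grid $G$ as $v_{a,i}$ with $1\le a\le m$ (the row) and $1\le i\le n$ (the column), and order the vertices column by column, columns in increasing order and each column from top to bottom. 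In this order the only neighbours of $v_{a,i}$ that precede it are its up-neighbour $v_{a-1,i}$ and its left-neighbour $v_{a,i-1}$, whichever exist — in particular at most two; this is the standard way of using the $2$-degeneracy of grids. We colour the vertices in this order.

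The arithmetic core is a lemma about $SP_9$: (a) every vertex has exactly four positive and four negative neighbours; (b) for any two distinct vertices $x,y$ of $SP_9$ and any signs $s_1,s_2\in\{+,-\}$, the set $N^{s_1}(x)\cap N^{s_2}(y)$ is non-empty, and in fact $|N^{s_1}(x)\cap N^{s_2}(y)|=2$ in every case except when $s_1=s_2=:s$ and the edge $xy$ of $SP_9$ itself has sign $s$, in which case the size is exactly $1$. This follows from a short computation in $\mathbb F_9=\mathbb F_3(\alpha)$ with $\alpha^2=-1$ and nonzero squares $\{1,2,\alpha,2\alpha\}$, or from the strongly regular parameters $(9,4,1,2)$ of the Paley graph of order $9$.

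The colouring proceeds greedily subject to the invariant: whenever a vertex $v_{a-1,i}$ is coloured, we arrange that $h(v_{a-1,i})\ne h(v_{a,i-1})$. This guarantees that at every later step the two already-coloured neighbours $v_{a,i-1}$ and $v_{a-1,i}$ of the current vertex $v_{a,i}$ carry distinct colours. To colour $v_{a,i}$ one then needs a vertex of $SP_9$ lying in $N^{s_1}(h(v_{a,i-1}))\cap N^{s_2}(h(v_{a-1,i}))$, where $s_1,s_2$ are the signs in $\Sigma$ of the two incident edges, and moreover differing from the single colour $h(v_{a+1,i-1})$ — avoiding that colour is what re-establishes the invariant for the step that later colours $v_{a+1,i}$. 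By part (b) of the lemma this intersection usually has size at least $2$, so a legal image survives the deletion of one colour; the topmost vertex of each column and all of the first column are coloured directly using part (a), since there only one or zero sign-constrained neighbours are present.

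The one obstruction is the exceptional configuration of the lemma: when the two edges at $v_{a,i}$ carry the same sign $s$ and the colours $h(v_{a,i-1})$, $h(v_{a-1,i})$ happen to be joined by an $s$-edge of $SP_9$, there is a unique admissible image for $v_{a,i}$, and if it coincides with $h(v_{a+1,i-1})$ the greedy step fails. Handling this is the crux of the argument. I would strengthen the invariant so that it additionally constrains the $SP_9$-sign of the edge between $h(v_{a-1,i})$ and $h(v_{a,i-1})$ in terms of the signs $s_1,s_2$ read off from $\Sigma$ (these are part of the input and hence known in advance): at the step colouring $v_{a-1,i}$ the lemma still leaves at least two admissible images, and one argues that one of them can be chosen so as to keep the subsequent step out of the bad case; when even this is blocked, a bounded back-up that recolours $v_{a-1,i}$, and if necessary a bounded number of its predecessors in the same column, restores progress. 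Verifying that these local moves always exist reduces to inspecting a small finite list of colour patterns in $SP_9$, and this is where the bulk of the case analysis sits. Once every vertex has been coloured, $h$ is by construction a signified homomorphism $(G,\Sigma)\to SP_9$, so $\chi_2(G,\Sigma)\le 9$, and therefore $\chi_2(\mathcal G)\le 9$.
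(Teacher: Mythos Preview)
Your greedy scheme with the distinctness invariant is a reasonable starting point, but the proof as written has a genuine gap precisely where you locate it: the handling of the exceptional configuration. You correctly identify that when $s_1=s_2=s$ and the $SP_9$-edge between $h(v_{a,i-1})$ and $h(v_{a-1,i})$ has sign $s$, the admissible intersection has size exactly $1$, and that this unique colour may coincide with $h(v_{a+1,i-1})$. Your remedy --- strengthening the invariant to prescribe the $SP_9$-sign between the two previously coloured neighbours, and falling back on a ``bounded back-up'' in the current column when that fails --- is only asserted, not carried out. Nothing you have written shows that the strengthened invariant can itself be maintained (at the step colouring $v_{a-1,i}$ you are already down to at most two admissible images after avoiding $h(v_{a,i-1})$, so imposing a further sign condition may leave none), nor that the cascade of recolourings terminates in bounded depth rather than propagating all the way up the column. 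Until that finite case analysis is actually done and seen to close, the argument is a plan, not a proof.

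The paper sidesteps this difficulty entirely by a list-propagation technique rather than a greedy one. Instead of committing to a single colour at each vertex and then backtracking when stuck, it processes the grid row by row and, along the new row, propagates a $3$-element \emph{triangle-free} set $S_i\subset V(SP_9)$ of candidate colours for $b_i$: the key lemma shows that for any triangle-free $3$-set $S_{i-1}$ at $b_{i-1}$ and any fixed colour $h(a_i)$ above, one can always find a triangle-free $3$-set $S_i$ each of whose members is simultaneously compatible with $h(a_i)$ and with some member of $S_{i-1}$. This uses $|N^{\pm}(S)|=8$ for triangle-free triples $S$, which forces $|N^{\pm}(S_{i-1})\cap N^{\pm}(h(a_i))|\ge 3$, and $N^{\pm}(h(a_i))$ is itself triangle-free. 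Having pushed these sets across the row, one then colours the row in reverse, always finding a compatible predecessor inside $S_{i-1}$. The point is that keeping three live options at each position absorbs exactly the slack that your single-colour greedy approach runs out of; no backtracking or exceptional-case analysis is needed.
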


\section{Graph $SP_9$}

Let $q$ be a prime power such that $q \equiv 1 \pmod 4$ and $\mathbb{F}_q$ be finite field of order $q$. The \textit{Paley graph} $P_q$ is undirected graph with vertex set $V(P_q) = \mathbb{F}_q$ and edge set $E(P_q) = \big\{\{x,y\} : y-x $ is a square in $\mathbb{F}_q\big\}$ (note that since $-1$ is a square in $\mathbb{F}_q$, if $x-y$ is a square, then $y-x$ is a square, so the graph $P_q$ is well defined). 

The \textit{signified Paley graph} $SP_q$ is signified graph $(K_q, \Sigma)$, where $K_q$ is the complete graph on vertices $\mathbb{F}_q$ and set of negative edges $\Sigma = \big\{\{x,y\} : y-x$ is not a square in $\mathbb{F}_q\big\}$. 

The following properties of $SP_q$ are well know:

\begin{lemma}\label{l1}\cite{self}
$SP_q$ is isomorphic to the signified graph constructed by reversing signs on all edges.
\end{lemma}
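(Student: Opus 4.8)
The plan is to exhibit an explicit vertex bijection $\psi$ of $\mathbb{F}_q$ that turns every positive edge of $SP_q$ into a negative edge and every negative edge into a positive one; such a $\psi$ is precisely a signified isomorphism from $SP_q$ onto the graph obtained by reversing all signs. Recall that in $SP_q$ an edge $\{x,y\}$ carries the sign '$+$' exactly when $y-x$ is a (nonzero) square in $\mathbb{F}_q$ and the sign '$-$' exactly when $y-x$ is a non-square; since $-1$ is a square for $q\equiv 1\pmod 4$, this condition does not depend on the order of $x$ and $y$. Reversing all signs swaps these two cases, so it suffices to produce a bijection $\psi$ with the property that $y-x$ is a square if and only if $\psi(y)-\psi(x)$ is a non-square.

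The key idea is to scale by a fixed non-square. Because $q$ is odd, the nonzero squares $Q=(\mathbb{F}_q^{*})^{2}$ form a subgroup of $\mathbb{F}_q^{*}$ of index $2$, so a non-square element $c$ exists; moreover the product of $c$ with a square is a non-square and the product of $c$ with a non-square is a square. I would then define $\psi\colon \mathbb{F}_q\to\mathbb{F}_q$ by $\psi(x)=cx$. This map is a bijection since $c\neq 0$, so it is an automorphism of the underlying complete graph $K_q$ and in particular sends edges to edges.

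It then remains to check the sign condition, which is a one-line computation: for any two distinct vertices $x,y$ we have $\psi(y)-\psi(x)=c\,(y-x)$, and by the multiplicative behaviour of $c$ recalled above, $c(y-x)$ is a square precisely when $y-x$ is a non-square. Hence $\psi$ sends each positive edge of $SP_q$ to a negative edge and each negative edge to a positive edge, which is exactly the required isomorphism onto the sign-reversed graph.

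I do not expect a genuine obstacle here; the only substantive point is the elementary fact that the nonzero squares form an index-two subgroup, so that multiplication by a non-square interchanges squares and non-squares, and everything else is a direct verification. If one prefers, the same argument can be phrased through the quadratic character $\eta\colon \mathbb{F}_q^{*}\to\{\pm 1\}$, using its multiplicativity $\eta(ab)=\eta(a)\eta(b)$ together with $\eta(c)=-1$ to obtain $\eta\big(c(y-x)\big)=-\eta(y-x)$, which encodes the same flip of signs.
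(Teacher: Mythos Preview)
Your argument is correct: scaling by a fixed non-square $c\in\mathbb{F}_q^{*}$ gives a bijection $\psi(x)=cx$ of the vertex set, and since $\psi(y)-\psi(x)=c(y-x)$ with $c$ a non-square, the quadratic character of every difference is flipped, which is exactly what is needed for a signified isomorphism onto the sign-reversed graph. The only facts used --- that the nonzero squares form an index-$2$ subgroup of $\mathbb{F}_q^{*}$ when $q$ is odd, and hence that multiplication by a non-square interchanges squares and non-squares --- are standard.

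As for comparison with the paper: note that the paper does not actually supply a proof of this lemma. It is stated with a citation to Sachs~\cite{self} and treated as a known property of signified Paley graphs, so there is no in-paper argument to compare against. Your proof is the natural one and would be the expected justification behind the cited statement.
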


\begin{lemma}\label{l2}
For every square $a\in\mathbb{F}_q$ and every $b\in \mathbb{F}_q$, the function $f(x)=ax+b$ is an automorphism in $SP_q$. This means that $SP_q$ is vertex-transitive and edge-transitive. 
\end{lemma}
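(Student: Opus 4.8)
The plan is to verify first that $f$ is a well-defined automorphism of $SP_q$, and then to read off vertex- and edge-transitivity as consequences. I assume $a\neq 0$, i.e.\ that $a$ is a \emph{nonzero} square (if $a=0$ then $f$ is constant, hence not an automorphism; presumably this is excluded by the statement's convention). Since $a\neq 0$, the affine map $f(x)=ax+b$ is a bijection of $\mathbb{F}_q$ with inverse $x\mapsto a^{-1}(x-b)$, and $a^{-1}$ is again a nonzero square. Because the underlying graph of $SP_q$ is the complete graph $K_q$, $f$ automatically sends edges to edges; the only thing to check is that it preserves the two sign classes. For $x\neq y$ one has $f(y)-f(x)=a(y-x)$, and since the nonzero squares form the index-$2$ subgroup of $\mathbb{F}_q^{\times}$ and $a$ lies in it, $a(y-x)$ is a square if and only if $y-x$ is a square. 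Hence $\{f(x),f(y)\}$ is negative exactly when $\{x,y\}$ is negative, so $f$ preserves signs and is an automorphism of $SP_q$.

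For vertex-transitivity I would note that, given vertices $u,v$, the translation $g(x)=x+(v-u)$ is an instance of $f$ with $a=1$ (a square) and $b=v-u$, so $g\in\mathrm{Aut}(SP_q)$ and $g(u)=v$. For edge-transitivity I would first pin down the meaning: since automorphisms preserve signs, it must mean that $\mathrm{Aut}(SP_q)$ is transitive on the positive edges and transitive on the negative edges separately (and consistently with Lemma~\ref{l1}, which only supplies an \emph{isomorphism} to the sign-reversed graph, not an automorphism mixing the two classes). Fix a nonzero square $s$ and a non-square $t_0$. Given an arbitrary positive edge $\{x,y\}$, put $r=y-x$, a nonzero square; then $a:=sr^{-1}$ is a square, and the automorphism $z\mapsto a(z-x)$ sends $x\mapsto 0$, $y\mapsto s$, carrying $\{x,y\}$ to the fixed positive edge $\{0,s\}$. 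Given an arbitrary negative edge $\{x,y\}$, put $r=y-x$, a non-square; then $a:=t_0 r^{-1}$ is a square (a quotient of two non-squares), and $z\mapsto a(z-x)$ sends $x\mapsto 0$, $y\mapsto t_0$, carrying $\{x,y\}$ to the fixed negative edge $\{0,t_0\}$. This yields transitivity on each sign class.

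There is no serious obstacle here; the proof is essentially two lines of modular arithmetic. The only points that warrant a moment's attention are: restricting to $a\neq 0$; the elementary but essential fact that a quotient of two non-squares is a square (this is exactly what lets the \emph{square} multipliers act transitively on the \emph{negative} edges); and being careful about what ``edge-transitive'' can possibly mean for a signified graph, since sign preservation forbids sending a positive edge to a negative one.
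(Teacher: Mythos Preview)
The paper does not actually prove this lemma: it is listed among the ``well known'' properties of $SP_q$ and is stated without argument or citation. Your proof is correct and is the standard one; the only caveat is the $a\neq 0$ assumption you already flagged, and your reading of ``edge-transitive'' as transitivity within each sign class is the natural one and is exactly how the paper uses the lemma later (e.g.\ in the proof of Lemma~\ref{l4}, where it reduces to a fixed positive edge $\{0,1\}$).
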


To define $SP_9$ we use Galois' field $GF(3^2)$. Elements of this field are the polynomials over $GF(3)$ with multiplication modulo $x^2+1$. Elements $\{0,1,2,x,2x\}$ are squares and $\{x+1, x+2, 2x+1, 2x+2 \}$ are non squares. The graph $P_9$ is presented on Figure~\ref{fig:G}. 

\begin{figure}
    \centering
    \begin{tikzpicture}
    \tikzset{vertex/.style = {shape=circle,inner sep=1pt,fill,minimum size=.5em}}
    \tikzset{edge/.style = {_-_,> = stealth',thick}}
    \node[vertex, label=90:{$0$}] at (90:2cm) (u0) {};
    \node[vertex, label=130:{$2$}] at (130:2cm) (u1) {};
    \node[vertex, label=170:{$2x+2$}] at (170:2cm) (u2) {};
    \node[vertex, label=210:{$2x+1$}] at (210:2cm) (u3) {};
    \node[vertex, label=250:{$2x$}] at (250:2cm) (u4) {};
    \node[vertex, label=290:{$x$}] at (290:2cm) (u5) {};
    \node[vertex, label=330:{$x+2$}] at (330:2cm) (u6) {};
    \node[vertex, label=370:{$x+1$}] at (370:2cm) (u7) {};
    \node[vertex, label=410:{$1$}] at (410:2cm) (u8) {};
    \foreach \i in {0, 1, 2, ..., 7} 
        \draw[edge] (u\i) -- (u\the\numexpr\i+1\relax);
    \draw[edge] (u0) -- (u8);  
    \draw[edge] (u0) -- (u4);
    \draw[edge] (u0) -- (u5);
    \draw[edge] (u1) -- (u8);
    \draw[edge] (u3) -- (u7);
    \draw[edge] (u3) -- (u8);
    \draw[edge] (u2) -- (u4);
    \draw[edge] (u6) -- (u1);
    \draw[edge] (u6) -- (u2);
    \draw[edge] (u5) -- (u7);
    \end{tikzpicture}   
    \caption{Paley graph $P_9$}
    \label{fig:G}
\end{figure}
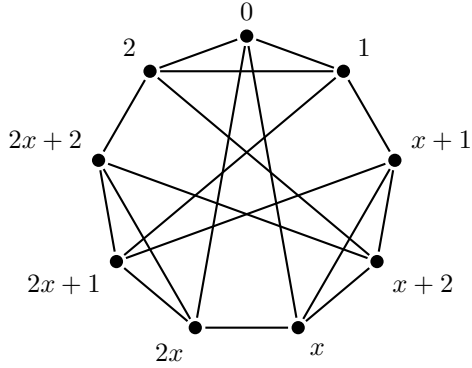

\begin{lemma}\label{l3}
For every vertex $v \in V(SP_9)$, $|N^+(v)| = 4$ and $|N^-(v)| = 4$.
\end{lemma}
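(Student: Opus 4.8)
The plan is to reduce to a single vertex using the transitivity supplied by Lemma~\ref{l2}, and then simply count squares and non-squares in $\mathbb{F}_9$. First I would note that for any $v \in V(SP_9) = \mathbb{F}_9$ the map $f(x) = x + v$ (the case $a = 1$ of Lemma~\ref{l2}) is an automorphism of $SP_9$ sending $0$ to $v$; it carries $N^+(0)$ onto $N^+(v)$ and $N^-(0)$ onto $N^-(v)$, so it suffices to prove $|N^+(0)| = |N^-(0)| = 4$.

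Next, by the definition of $SP_q$ a vertex $u \neq 0$ lies in $N^+(0)$ exactly when $u - 0 = u$ is a (nonzero) square in $\mathbb{F}_9$, and in $N^-(0)$ exactly when $u$ is a non-square. Hence $|N^+(0)|$ equals the number of nonzero squares and $|N^-(0)|$ equals the number of non-squares. Since $\mathbb{F}_9^{*}$ is cyclic of order $8$, the map $x \mapsto x^2$ is $2$-to-$1$ onto its image (its kernel $\{1,-1\}$ has order $2$), so there are exactly $8/2 = 4$ nonzero squares and therefore $8 - 4 = 4$ non-squares; concretely these are the sets $\{1,2,x,2x\}$ and $\{x+1,x+2,2x+1,2x+2\}$ already identified above. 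As there is no loop at $0$, the element $0$ itself is not a neighbour, so $|N^+(0)| = |N^-(0)| = 4$, and the claim follows for every $v$.

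There is essentially no obstacle beyond bookkeeping here; the only point to watch is that $0$, although a square, is excluded as a neighbour, and that the listed partition of $\mathbb{F}_9^{*}$ is the correct square/non-square partition. As an alternative to the cyclicity argument, the equality $|N^+(v)| = |N^-(v)|$ can be obtained from Lemma~\ref{l1}: reversing every sign is an isomorphism of $SP_9$, so it must map $N^+(v)$ bijectively onto $N^-(v')$ for the corresponding vertex $v'$, forcing $|N^+(v)| = |N^-(v)|$; combined with $|N^+(v)| + |N^-(v)| = q - 1 = 8$ this again yields $4$.
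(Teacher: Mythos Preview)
Your argument is correct. The paper does not actually prove this lemma: it is listed among properties of $SP_q$ that are ``well known'' and is simply stated without proof. Your writeup supplies the natural justification---reduce to $v=0$ via the translation automorphism from Lemma~\ref{l2} and count nonzero squares in $\mathbb{F}_9^*$---and matches the explicit square/non-square partition the paper records just before the figure.
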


We shall say that set $S \subset V(SP_9)$ is \textit{triangle free} if the subgraph of $SP_9$ induced by $S$ contains neither a triangle signed by '$+$' nor a triangle signed by '$-$'.  

\begin{lemma}\label{l4}
For every vertex $v \in V(SP_9)$, the sets $N^+(v)$ and $N^-(v)$ are triangle free.
\end{lemma}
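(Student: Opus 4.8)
The plan is to exploit the symmetries of $SP_9$ recorded in Lemmas~\ref{l1} and~\ref{l2} to reduce the claim to a single finite verification. First, by Lemma~\ref{l2} the graph $SP_9$ is vertex-transitive, so it suffices to prove the statement for the vertex $v=0$. Second, by Lemma~\ref{l1} the sign-reversing map is an automorphism of $SP_9$ (as a signified graph onto its sign-reversed copy); it carries $N^+(0)$, together with the signs induced on it, onto $N^-(w)$ for some vertex $w$, with all induced signs flipped, and this clearly preserves the property of containing no monochromatic triangle. Running over all vertices $v$ and combining with vertex-transitivity, it is therefore enough to show that $N^+(0)$ is triangle free.

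Next I would write $N^+(0)$ down explicitly. Since $y-0=y$, the set $N^+(0)$ is precisely the set of nonzero squares of $GF(3^2)$, namely $\{1,2,x,2x\}$ (and, for the record, $N^-(0)=\{x+1,x+2,2x+1,2x+2\}$, consistent with Lemma~\ref{l3}). It remains to classify the six differences arising from the pairs inside $\{1,2,x,2x\}$. Computing modulo $x^2+1$ over $GF(3)$, so that $x^2=-1=2$, one gets $2-1=1$ and $2x-x=x$, which are squares, while $x-1=x+2$, $\;2x-1=2x+2$, $\;x-2=x+1$ and $2x-2=2x+1$ are all non-squares. Hence inside $N^+(0)$ the edges $\{1,2\}$ and $\{x,2x\}$ are positive, and the four edges $\{1,x\},\{2,x\},\{1,2x\},\{2,2x\}$ are negative.

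Finally I would read the conclusion off this edge list: the positive edges of $N^+(0)$ form the matching $\{1,2\},\{x,2x\}$, which is triangle free, and the negative edges form the $4$-cycle $1\!-\!x\!-\!2\!-\!2x\!-\!1$, which is also triangle free; so $N^+(0)$ is triangle free. There is no real obstacle here beyond bookkeeping — the only delicate point is getting the arithmetic in $GF(3^2)$ right when deciding which differences are squares. (Alternatively, one could avoid enumerating all six edges: multiplication by the square $x$ is an automorphism that fixes $0$ and cyclically permutes $1\to x\to 2\to 2x\to 1$, so any monochromatic triangle in $N^+(0)$ can be assumed to miss the vertex $2x$, leaving only the triangle $\{1,2,x\}$ to inspect, whose edge signs are $+,-,-$.)
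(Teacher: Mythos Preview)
Your proof is correct and follows essentially the same route as the paper: reduce via Lemmas~\ref{l1} and~\ref{l2} to the single case $v=0$, $N^+(0)=\{1,2,x,2x\}$, and then verify directly that the induced positive edges form the matching $\{1,2\},\{x,2x\}$ while the negative edges form a $4$-cycle. Your write-up is simply more explicit about the field arithmetic and the reduction via sign-reversal, and your alternative automorphism shortcut is a nice extra observation not in the paper.
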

\begin{proof}
By Lemmas~\ref{l1} and~\ref{l2}, it is enough to prove this lemma for vertex $v=0$ and $N^+(v)$. The subgraph of $SP_9$ induced by $N^+(0) = \{1,2,x,2x\}$ contains two disjoint edges signed~'$+$' ($\{1,2\}$ and $\{x,2x\}$) and the cycle $C_4$ signed~'$-$' $(1,2x,2,x)$.
\end{proof}

\begin{lemma}\label{l5}
Suppose that $S \subset V(SP_9)$, $|S|=3$ and $S$ is triangle free, then $|N^+(S)| = 8$ and $|N^-(S)| = 8$.
\end{lemma}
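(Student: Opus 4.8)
The plan is to use the automorphisms of $SP_9$ to cut the problem down to a single concrete triangle-free triple, and then to finish by a short direct computation in $\mathbb{F}_9$.

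First I would unwind what "$S$ triangle free" means for $|S|=3$: the subgraph induced by $S$ is a triangle of $K_9$, and being triangle free just says this triangle is not monochromatic, i.e. it carries exactly two edges of one sign and one edge of the other. By Lemma~\ref{l1} we may apply the sign-reversing isomorphism if needed, so assume the triangle has two $+$ edges and one $-$ edge. Let $u$ be the vertex of $S$ not incident to the $-$ edge; then the other two vertices of $S$ lie in $N^+(u)$. By vertex-transitivity (Lemma~\ref{l2}) we may take $u=0$, so that $S=\{0,v,w\}$ with $v,w\in N^+(0)=\{1,2,x,2x\}$ and $\{v,w\}$ a $-$ edge.

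Next I would pin down $\{v,w\}$. From the proof of Lemma~\ref{l4}, inside $N^+(0)$ the $-$ edges form the $4$-cycle $(1,2x,2,x)$, so there are exactly four choices for $\{v,w\}$. The maps $y\mapsto x^k y$ ($k=0,1,2,3$) are automorphisms fixing $0$ by Lemma~\ref{l2}, and $y\mapsto xy$ acts on $N^+(0)$ as the $4$-cycle $(1,x,2,2x)$, which cyclically permutes the four $-$ edges of that cycle. Hence all four candidate triples lie in one orbit of $\mathrm{Aut}(SP_9)$, and it suffices to treat the single representative $S=\{0,1,2x\}$. (If one prefers to skip this last symmetry step, one can instead run the computation below for each of the four triples $\{0,1,x\},\{0,1,2x\},\{0,2,x\},\{0,2,2x\}$ separately.)

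Finally, for $S=\{0,1,2x\}$ I would read off the neighbourhoods from the squares $\{1,2,x,2x\}$ and non-squares $\{x+1,x+2,2x+1,2x+2\}$: this gives $N^+(0)=\{1,2,x,2x\}$, $N^+(1)=\{0,2,x+1,2x+1\}$, $N^+(2x)=\{0,x,2x+1,2x+2\}$, whose union is $V(SP_9)\setminus\{x+2\}$; and $N^-(0)=\{x+1,x+2,2x+1,2x+2\}$, $N^-(1)=\{x,x+2,2x,2x+2\}$, $N^-(2x)=\{1,2,x+1,x+2\}$, whose union is $V(SP_9)\setminus\{0\}$. So $|N^+(S)|=|N^-(S)|=8$ for the representative; since sign-preserving automorphisms preserve $|N^+(\cdot)|$ and $|N^-(\cdot)|$ while the isomorphism of Lemma~\ref{l1} interchanges them, and both values equal $8$, the conclusion holds for every triangle-free $S$. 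I expect the only real obstacle to be the reduction step, namely checking that the four $-$ edges inside $N^+(0)$ really form one orbit under the stabilizer of $0$, so that a single case suffices; everything after that is a routine finite-field calculation.
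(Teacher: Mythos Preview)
Your argument is correct. The reduction via the sign-reversing isomorphism and the stabilizer $\{y\mapsto x^{k}y\}$ of $0$ is sound (indeed $x$ is a square since $(x+1)^{6}=x$, so $y\mapsto xy$ is an automorphism fixing $0$ and cycling the four negative edges inside $N^{+}(0)$), and the explicit computation for $S=\{0,1,2x\}$ is accurate.

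The paper, however, proceeds differently. It first uses Lemma~\ref{l1} to reduce to proving only $|N^{+}(S)|=8$ (rather than to fix the sign pattern of the triangle). Since $S$ carries at least one positive edge, edge-transitivity from Lemma~\ref{l2} is invoked to place that edge at $\{0,1\}$, and the six admissible choices for the third vertex are simply listed with $N^{+}(S)$ computed in each case. So where you push the symmetry all the way down to a single representative (at the cost of the orbit check for the action of $y\mapsto xy$ on the negative $C_4$ inside $N^{+}(0)$, and of computing both $N^{+}$ and $N^{-}$ for that representative), the paper stops after fixing one edge and absorbs both sign patterns $(2+,1-)$ and $(1+,2-)$ into a uniform six-case enumeration. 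Your route is tidier and explains \emph{why} a single case suffices; the paper's route is blunter but needs nothing about the vertex stabilizer beyond what Lemma~\ref{l2} already states.
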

\begin{proof}
Since Lemma~\ref{l1}, it is enough to prove this lemma for $N^+(S)$. The graph induced by $S$ do not contain triangle signed~'$-$' so it must contain at least one edge signed~'$+$'. By Lemma~\ref{l2}, we may assume that this edge is $\{0,1\}$.
We will consider cases depending on third element of $S$:
\begin{itemize}
\setlength\itemsep{-0.3em}
    \item $S=\{0,1,2\}$ creates triangle
    \item for $S=\{0,1,x\}$, $N^+(S)=\mathbb{F}_9\setminus\{ 2x+2 \}$
    \item for $S=\{0,1,x+1\}$, $N^+(S)=\mathbb{F}_9\setminus\{ 2x+2 \}$
    \item for $S=\{0,1,x+2\}$, $N^+(S)=\mathbb{F}_9\setminus\{ x+2 \}$
    \item for $S=\{0,1,2x\}$, $N^+(S)=\mathbb{F}_9\setminus\{ x+2 \}$
    \item for $S=\{0,1,2x+1\}$, $N^+(S)=\mathbb{F}_9\setminus\{ x+2 \}$
    \item for $S=\{0,1,2x+2\}$, $N^+(S)=\mathbb{F}_9\setminus\{ 2x+2 \}$.
\end{itemize}
\end{proof}

\section{Proof of Theorem~\ref{thmm}}

Consider a path $(u,v,w)$ with arbitrary signs on the edges $\{u,v\}$ and $\{v,w\}$. Suppose that we have: an arbitrary 3-elements triangle free set $S_1 \subset V(SP_9)$ of colors available in $u$ and an arbitrary color $b \in V(SP_9)$ for the vertex $w$. Then there is a 3-elements triangle free set $S_2 \subset V(SP_9)$ available in $v$. More precisely:

\begin{lemma}\label{l6}
Consider a path $(u,v,w)$ with arbitrary signs on the edges $\{u,v\}$ and $\{v,w\}$. For every 3-elements triangle free set $S_1 \subset V(SP_9)$ and every color $b\in SP_9$, there exists a 3-elements triangle free set $S_2 \subset V(SP_9)$ such that for each $s_2 \in S_2$ there exists $s_1 \in S_1$ and coloring $c: \{u,v,w\} \to V(SP_9)$ with $c(u)=s_1$, $c(v)=s_2$, $c(w)=b$.
\end{lemma}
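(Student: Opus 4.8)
The plan is to reduce the statement to a finite check over a small number of configurations, exploiting the symmetry of $SP_9$ to make that check manageable. First I would fix the signs on the two edges $\{u,v\}$ and $\{v,w\}$; since the edges of $SP_9$ come in two types (positive and negative) and Lemma~\ref{l1} lets me globally flip signs, there are essentially only two cases for the pair of signs on the path — both edges the same sign, and the two edges of opposite sign — though it may be cleanest to just handle all four sign patterns directly since they are symmetric. For a fixed sign pattern, a valid coloring $c$ with $c(u)=s_1$, $c(v)=s_2$, $c(w)=b$ exists precisely when $s_2$ lies in the appropriate signed neighborhood of $b$ (namely $N^+(b)$ or $N^-(b)$ according to the sign of $\{v,w\}$) and simultaneously $s_2$ lies in $N^{\epsilon}(s_1)$ for some $s_1\in S_1$, where $\epsilon$ is the sign of $\{u,v\}$; that is, $s_2 \in N^{\epsilon}(S_1) \cap N^{\sigma}(b)$, where $\sigma$ is the sign of $\{v,w\}$.

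So I need: for every $3$-element triangle-free set $S_1$, every $b$, and every sign choice $\epsilon,\sigma$, there is a $3$-element triangle-free set $S_2$ contained in $N^{\epsilon}(S_1)\cap N^{\sigma}(b)$. By Lemma~\ref{l5}, $|N^{\epsilon}(S_1)|=8$, so it misses exactly one vertex of $\mathbb{F}_9$; by Lemma~\ref{l3}, $|N^{\sigma}(b)|=4$. Hence $N^{\epsilon}(S_1)\cap N^{\sigma}(b)$ has size at least $3$ — it is obtained from a $4$-set by possibly deleting one vertex. If the intersection has size $4$, I need to find a triangle-free $3$-subset of a signed neighborhood $N^{\sigma}(b)$; but by Lemma~\ref{l4} the set $N^{\sigma}(b)$ is itself triangle-free, so any $3$-subset works. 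If the intersection has size exactly $3$, then it is a $3$-subset of the triangle-free set $N^{\sigma}(b)$, hence triangle-free, and I take $S_2$ to be exactly this set.

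The key observations that make this work are thus: (i) $N^{\epsilon}(S_1)$ omits only a single vertex (Lemma~\ref{l5}), (ii) $N^{\sigma}(b)$ has exactly $4$ vertices (Lemma~\ref{l3}), and (iii) every subset of a signed neighborhood of a single vertex is triangle-free (Lemma~\ref{l4}). Combining these, $N^{\epsilon}(S_1)\cap N^{\sigma}(b)$ always contains a triangle-free $3$-set, and by construction every $s_2\in S_2\subset N^{\epsilon}(S_1)$ has a witness $s_1\in S_1$ with $s_2\in N^{\epsilon}(s_1)$ and, since $s_2\in N^{\sigma}(b)$, the required coloring $c$ exists. The remaining step is to confirm that the sign reductions via Lemma~\ref{l1} are legitimate — flipping all signs in $SP_9$ is an isomorphism, so it simultaneously swaps $N^+$ and $N^-$ everywhere and preserves triangle-freeness, which is exactly what is needed to cut the sign cases down.

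The main obstacle I anticipate is not any deep difficulty but rather bookkeeping: making sure that a single chosen $S_2$ works uniformly — i.e.\ that I am allowed to let the witness $s_1$ depend on $s_2$ (which the statement permits) — and that the triangle-free condition on $S_2$ is genuinely inherited from being inside one $N^{\sigma}(b)$ rather than needing to appeal to $N^{\epsilon}(S_1)$, whose structure is less controlled. Once the size bounds from Lemmas~\ref{l3} and~\ref{l5} are in hand and the ``subset of a neighborhood is triangle-free'' fact from Lemma~\ref{l4} is invoked correctly, the proof should be short.
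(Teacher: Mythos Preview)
Your proposal is correct and follows essentially the same argument as the paper: use Lemma~\ref{l5} to see that $N^{\epsilon}(S_1)$ has $8$ elements, Lemma~\ref{l3} to see that $N^{\sigma}(b)$ has $4$, conclude that their intersection has size at least $3$, and then invoke Lemma~\ref{l4} to get triangle-freeness of $S_2$ from its containment in $N^{\sigma}(b)$. Your write-up is in fact a bit more careful than the paper's in spelling out the case split on the size of the intersection and in justifying why triangle-freeness is inherited, but the route is the same.
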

\begin{proof}
We will prove the lemma in case when both edges of path $(u,v,w)$ are marked with '$+$'. In any other case the proof is similar.. 

By Lemmas~\ref{l5} and~\ref{l3}, $|N^+(S_1)| = 8$ and $|N^+(b)| = 4$. Hence, there exists 3-elements set $S_2 \subset N^+(s_1) \cap N^+(b)$. By Lemma~\ref{l4}, the set $S_2$ is triangle free.
\end{proof}

\begin{proof}[Proof of Theorem~\ref{thmm}]
We color the grid $(G,\Sigma)$ row by row. It is easy to color first row by $SP_9$ (in fact, we can do it using only four colors, for example by  $N^+(0)=\{1,2,2x,2x+1\}$). Assume now that, for $k>1$, the first $k-1$ rows of $G$ have been already colored and we color $k$-th row. 

Let us denote the vertices in the $k-1$-th row by $a_1,a_2,...,a_n$ and the vertices in the $k$-th row by $b_1,b_2,...,b_n$. By Lemma~\ref{l3}, the vertex $b_1$ can be colored by four possible colors. By Lemma~\ref{l4}, any three of these colors form a triangle free set. Let us denote by $S_1$ any of these sets. Now for each $i=2,3,...,n$ we define set $S_i$ as a result of applying Lemma~\ref{l6} for the set $S_{i-1}$ and the color $h(a_i)$.

Now we can color vertices $b_1,b_2,...,b_n$ in reverse order. First we choose any color in $S_n$ for $h(b_n)$. For $h(b_{n-1})$ we set the color from $S_{n-1}$ such that the sign of the edge $(h(b_{n-1}),h(b_n))$ in $SP_9$ equals to the sign of the edge $(b_{n-1},b_{n})$ in the grid $(G,\Sigma)$. Notice that for each $s\in S_{n-1}$, the sign of the edge $(a_{n-1},b_{n-1})$ in the grid is equals to the sign of the edge $(s,h(a_{n-1}))$ in $SP_9$. Consecutive vertices we color in the same way.
\end{proof}

\end{document}